\def\b1{{1\!\!1}}
\def\cB{{\ca B}}
\def\cC{{\ca C}}
\def\cE{{\ca E}}
\def\cL{{\ca L}}
\def\cM{{\ca M}}
\def\cS{{\ca S}}
\def\sH{{\mathsf H}}
\def\sP{{\mathsf P}}
\def\sS{{\mathsf S}}
\def\bC{{\mathbb C}}           
\def\bI{{\mathbb I}}
\def\bN{{\mathbb N}}
\def\bR{{\mathbb R}}
\def\gF{{\mathfrak F}}
\def\gP{{\mathfrak P}}
\def\beq{\begin{eqnarray}}
\def\eeq{\end{eqnarray}}
\newcommand{\ca}[1]{{\cal #1}}         
\def\tr{\mbox{tr}}
\def\p{\parallel}
\newcommand{\bra}[1]{\langle{#1}|}
\newcommand{\ket}[1]{|{#1}\rangle}
\newtheoremstyle{thm}
{12pt}
{12pt}
{\itshape}
{}
{\itshape\bfseries}
{}
{1em}
{}
\theoremstyle{thm}
\newtheorem{theorem}{Theorem}
\newtheorem{proposition}[theorem]{Proposition}
\newtheorem{definition}[theorem]{Definition}
\begin{document}

\par 
\bigskip 
\par 
\rm


\par
\bigskip
\large
\noindent
{\bf Geometric viewpoint on the quantization of a fuzzy logic}
\bigskip
\par
\rm
\normalsize 


\noindent   {\bf Davide Pastorello}\\
\par
\noindent Department of Information Engineering and Computer Science, University of Trento \\via Sommarive 19, 38123 Povo (Trento), Italy\\
e-mail: d.pastorello@unitn.it\\
 \normalsize

\par

\rm\normalsize


\rm\normalsize


\par
\bigskip

\noindent
\small
{\bf Abstract}. Within the Hamiltonian framework, the propositions about a classical physical system are described in the Borel $\sigma$-algebra of a symplectic manifold (the phase space) where logical connectives are the standard set operations. 
Considering the geometric formulation of quantum mechanics we give a description of quantum propositions in terms of fuzzy events in a complex projective space equipped with K\"ahler structure (the quantum phase space) obtaining a quantized version of a fuzzy logic by deformation of the product $t$-norm.\\
\\
\noindent Keywords:\\ \emph{Geometric quantum mechanics, complex projective spaces, quantum logic, fuzzy set theory}

\normalsize

\section{Introduction}

In quantum experiments measurement processes alter the observed objects and simultaneous measurements of two physical quantities are sometimes impossible. These phenomenological evidences are sufficient to preclude the use of Boolean logic to describe propositions about quantum systems. In \cite{BvN} Birkhoff and von Neumann formulated the \emph{standard quantum logic} describing quantum propositions in the non-distributive lattice of orthogonal projectors on a Hilbert space where any sublattice of commuting projectors presents the structure of a Boolean $\sigma$-algebra. In the following decades some celebrated milestones in quantum logic were the Mackey's program \cite{Mac}, the Piron's axiomatization \cite{piron}, the work of Foulis and Randall on empirical logic \cite{FR}. In more recent years the field of quantum logic has been related to a variety of abstract structures that generalize the archetypical lattice of projectors like orthomodular posets, orthoalgebras, effect algebras and categories \cite{HQL}. There are also several geometric approaches to quantum logic like \cite{convex} or the recent works \cite{CK} and \cite{camosso}.     

In this work we adopt the geometric viewpoint on quantum mechanics, in the sense of \cite{Kibble,AS,BH,BSS,DV2} whereby quantum systems can be geometrically described in a classical-like fashion, to introduce a non-distributive orthocomplemented structure to represent quantum propositions in analogy with classical mechanics. Propositions about a classical system, represented by Borel sets in the phase space, form a Boolean $\sigma$-algebra where logical conjunction is the intersection, logical disjunction is the union, logical implication is the inclusion and the negation is the set complement. In the quantum case we observe how propositions can be represented by a special class of fuzzy events (fuzzy sets with Borel measurable membership functions on which a probability measure can be defined) in the quantum phase space. In geometric quantum mechanics the quantum phase space is given by a projective Hilbert space, so we are within the \emph{Hilbertian scenario} provided by the Birkhoff-von Neumann quantum logic. In this sense we obviously construct a quantum logic that is isomorphic to the lattice of orthogonal projectors on the underlying Hilbert space. However the obtained quantum logic turns out to be a special case of an orthomodular poset made by fuzzy sets characterized by a non-commutative product defined over the membership functions. So the description of quantum propositions suggested by the strong analogy between Hamiltonian classical mechanics and geometric quantum mechanics leads to a result that is connected to general quantum logics formed by fuzzy sets (objects introduced in \cite{py1}) beyond the notion of projective Hilbert space and the geometric formulation itself.

In Section \ref{props} we give a very short overview on propositions about a classical system as Borel sets in the phase space and on the need to construct non-Boolean structures to describe quantum propositions. Section \ref{GQM} is devoted to geometric quantum mechanics and its formal analogy to Hamiltonian classical mechanics. Section \ref{FSE} introduces some ideas about fuzzy sets that are relevant to represent quantum propositions in the quantum phase space. In section \ref{FECPS} we develop a concrete quantum logic by analogy with the classical case in the geometric framework, the result is a particular class of fuzzy events with idempotent membership functions w.r.t.\!\! a non-commutative product. In Section \ref{deform t-norm} we generalize the quantum logic of fuzzy events in the complex projective space introducing a general quantum structure that can be obtained from a collection of fuzzy sets defining non-standard set operations, in particular replacing the pointwise product in the $t$-norm with a non-commutative product. Finally we compare the obtain general result with some aspects of deformation quantization.

\section{Propositions about physical systems: Classical and Quantum}\label{props}

Informally, a \emph{proposition about a physical system} is a statement that can be verified by a measurement process on the system. For example, if the considered system is a particle moving on the real line a proposition is: "\emph{The position of the particle falls in $[0,1]$}". If the system is classical, we can decide whether any proposition is true or false with certainty if we have the complete knowledge of the system physical conditions (i.e.\!\!  we know its state). In the case we have a partial knowledge of the system conditions, we can just assign to each proposition a probability to be true. If the considered system is quantum, in general we cannot decide whether any proposition is true or false with certainty even if we have the maximum knowledge of its physical conditions (i.e.\! it is in a pure state). In this section we give a very short overview on the mathematical structures to describe propositions about classical and quantum systems following an outline similar to that adopted in \cite{V}.

Within the \emph{Hamiltonian formulation}, a classical system with $n$ spatial degrees of freedom is described in a $2n$-dimensional symplectic manifold $(\cM,\omega)$ called \emph{phase space}. A point in $\cM$ represents the complete knowledge of the system physical conditions. The time evolution of the system is represented by an integral curve in $\cM$ satisfying the \emph{Hamilton equation}:
\beq
\frac{dx}{dt}=X_H(x(t)),
\eeq
where $X_H$ is the \emph{Hamiltonian vector field} given by the \emph{Hamiltonian function} $H:\cM\rightarrow\bR$, that is the unique vector field such that $\omega(X_H, Y)=dH(Y)$ for any vector field $Y$ on $\cM$. The smooth function $H$ represents the total energy of the considered classical system. 

If there is a lack of knowledge about system conditions 
 then the state does not coincide with a single point of $\cM$ but it is represented by a smooth function $\rho:\cM\rightarrow[0,1]$ so that $\rho(x)$ is the probability that the exact physical condition of the system is represented by $x\in\cM$. According to this meaning, $\rho$ satisfies the normalization requirement:
\beq
\int_\cM \rho \,d\mu=1,
\eeq
where the measure $\mu$ on the Borel $\sigma$-algebra $\cB(\cM)$ coincides to the Lebesgue measure on every local chart of $\cM$.
In this case the dynamics is described by the \emph{Liouville equation}:
\beq
\frac{\partial \rho}{\partial t}+\{\rho, H\}_{PB}=0,
\eeq     
where the \emph{Poisson bracket} of a pair of real smooth functions $f$ and $g$ is defined as $\{f,g\}_{PB}:=\omega(X_f, X_g)$. 

From a general viewpoint a state of a classical system can be defined as a map that assigns to each proposition the probability to be true. Any proposition $P$ about the considered classical system can be represented by the set of points in $\cM$ that render $P$ true, in this sense a natural notion of classical state is a measure. On one hand a probabilistic state, described by a probability distribution $\rho$ can be defined by a Borel probability measure $\sigma:\cB(\cM)\rightarrow [0,1]$ in the following way:
\beq\label{state}
\sigma(A):= \int_A \rho \, d\mu\qquad A\in\cB(\cM).
\eeq
On the other hand a \emph{sharp} state, i.e. given by a single point in $x\in\cM$, can be defined as Dirac mass $\sigma:=\delta_x$ concentrated at $x\in\cM$. In this picture classical propositions can be identified as Borel sets in $\cM$ where logical connectives correspond to set operations. Given $P,Q\in\cB(\cM)$, $P\cup Q$  represents the disjunction of the corresponding propositions, $P\cap Q$ represents the conjunction and $\cM\setminus P$ represents the negation of proposition $P$. The inclusion $P\subset Q$ represents the implication $P\Rightarrow Q$. We can assume that any set of $\cB(\cM)$ represents a proposition about the considered classical system, otherwise we can assume that the propositions form a $\sigma$-algebra that is not the full $\cB(\cM)$. Anyway the propositions about a classical system present a structure of \emph{Boolean $\sigma$-algebra} w.r.t. the partial order relation given by set inclusion $\subset$. Let us introduce some basic notions about partially ordered sets (posets) and lattices in the following definition:

\begin{definition}
A poset $(L,\geq)$ is said to be:
\\
i) \textbf{bounded} if $L$ contains a minimum 0 and a maximum $\bI$; \\
ii) \textbf{orthocomplemented} if $L$ is equipped with a map $\neg:L\rightarrow L$, called \textbf{orthocomplementation}, satisfying:

a) $\neg (\neg p)=p$ for any $p\in L$,
 
b) $p\geq q\Rightarrow \neg q\geq \neg p$ for any $p,q\in L$,

c) the greatest lower bound $p\wedge \neg p$ and the least upper bound $p\vee \neg p$ exist in $L$ and

 $p\wedge \neg p=0$, $p\vee \neg p=\bI$;
\\
iii) \textbf{$\sigma$-orthocomplete} if every countable set $\{p_i\}_{i\in\bN}$ 
made by \textbf{orthogonal} elements, i.e. $\neg p_i\geq p_j$ (written $p_i\perp p_j$) for $i\not =j$, admits least upper bound $\vee_{i\in\bN} p_i\in L$.
\\
iv) \textbf{orthomodular} if $L$ is orthocomplemented and $q\geq p$ implies $q=p\vee (\neg p\wedge q)$.
\\
Two elements $p,q\in L$ are called \textbf{compatible} if $p=r_1\vee r_3$ and $q=r_2\vee r_3$ with $r_i\perp r_j$ for $i\not = j$.
\\
A poset $(L,\geq)$ is a \textbf{lattice} if for any $p,q\in L$ the greatest lower bound $p\wedge q$ and the least upper bound $p\vee q$ exist. 
\\
A lattice $(L,\geq)$ is said to be \textbf{distributive} if  
\beq\label{distrib}
p\vee(q\wedge r)=(p\vee q)\wedge (p\vee r) \qquad\mbox{and} \qquad p\wedge(q\vee r)=(p\wedge q)\vee (p\wedge r) 
\eeq
 for any $p,q,r\in L$.
\\
A \textbf{Boolean algebra} is a lattice that is distributive, bounded, orthocomplemented (hence orthomodular). A \textbf{Boolean $\sigma$-algebra} is a Boolean algebra such that any countable subset admits least upper bound.

 \end{definition}

Let us recall some remarkable facts: 1) In any orthocomplemented lattice \emph{De Morgan laws} are satisfied. 2) Any $\sigma$-algebra $X$ of sets is a Boolean $\sigma$-algebra where the partial order relation is the set inclusion, so $\vee$ and $\wedge$ correspond to $\cup$ and $\cap$ respectively, the maximum is $X$ and the minimum is $\emptyset$, the orthocomplementation of $A\in X$ is $\neg A=X \setminus A$. 3) Any pair of elements in a Boolean algebra is compatible. 4) A Boolean algebra can be equivalently defined as a six-tuple $(L, \wedge, \vee, \neg, 0, \bI)$ satisfying the following requirement:
\\
i) $\wedge$ and $\vee$ are commutative associative binary operations on $L$;
\\
ii) $\wedge$ and $\vee$ are distributive (\ref{distrib});
\\
iii) $p=p\vee(p\wedge q)=p\wedge(p\vee q)$ for any $p,q\in L$;
\\
iv) $p\vee 0=p$ and $p\wedge \bI=p$ for any $p\in L$;
\\
v) $\neg$ is a unary operation on $L$ such that $p\vee\neg p=1$ and $p\wedge \neg p=0$ for any $p\in L$.

In the case of quantum systems, if we need a model to describe propositions  we must consider some phenomenological evidences like quantum randomness (and relative non-epistemic uncertainty) and the existence of physical quantities that cannot be measured simultaneously (like position and momentum of a quantum particle, or linear and circular polarizations of a photon), so there are pairs of propositions that cannot be simultaneously verifiable and meaningfully combined via logical connectives. Therefore the structure of Boolean $\sigma$-algebra turns out to be too severe for quantum propositions. Let us relax the structure of Boolean $\sigma$-algebra and consider a bounded $\sigma$-orthocomplete orthomodular poset where a notion of \emph{generalized probability measure}, i.e.\!\! a notion of state, can be defined requiring normalization and additivity properties. We adopt the general definition of \emph{quantum logic} given in \cite{py3}. 
\begin{definition}\label{GPM}
A \textbf{generalized probability measure} on a bounded $\sigma$-orthocomplete orthomodular poset $L$ is a map $\sigma:L\rightarrow[0,1]$ satisfying:
\\
i) $\sigma(\bI)=1$;\\
ii) For any countable set $\{p_i\}_{i\in\bN}$ such that $p_i\perp p_j$ for $i\not =j$:
$$\sigma(\vee_{i\in\bN} p_i)=\sum_{i\in\bN}\sigma(p_i).$$
Let $\cS$ be a set of generalized probability measures on $L$, $\cS$ is called \textbf{ordering set} if $\sigma(p)\geq \sigma(q)$ for any $\sigma\in\cS$ implies $p\geq q$. 
\\
A \textbf{quantum logic} is a bounded $\sigma$-orthocomplete orthomodular poset $L$ equipped with an ordering set of generalized probability measures on $L$. 
\end{definition}

\noindent 
The elements of a quantum logic represent propositions about a quantum system and the compatible elements represent the propositions that can be simultaneously verifiable. The partial order relation between compatible propositions represents the logical implication, from an operational viewpoint in order to establish that a proposition implies another proposition one must test them varying the state of the observed quantum system, thus only a poset allowing for an ordering set of states can represent quantum propositions with physical meaning.   
The most known example of quantum logic is given by the non-distributive lattice $\gP(\sH)$ of orthogonal projectors in a separable Hilbert space $\sH$, that we call \emph{standard B-vN quantum logic}. The partial order relation on $\gP(\sH)$ is the inclusion of projection subspaces int $\sH$, the maximum $\bI$ is the identity operator, the minimum is the null operator and the orthocomplement of $P\in\gP(\sH)$ is $\neg P=\bI-P$.
In $\gP(\sH)$ the compatible elements are the orthogonal projectors that commute. The generalized probability measures on $\gP(\sH)$ are in bijective correspondence with the density matrices\footnote{A density matrix $\rho$ on $\sH$ is a positive trace class operator such that $\tr(\rho)=1$.} on $\sH$, as provided by Gleason's theorem \cite{Gleason}.

The main part of the present paper is devoted to the construction of a concrete quantum logic within the geometric formulation of quantum mechanics in analogy to the classical case sketched in this section. We have a quantum phase space given by a complex projective space, in order to describe quantum propositions as subsets of the phase space we must consider fuzzy sets in view of the randomness of outcomes in quantum experiments.

\section{Geometric quantum mechanics}\label{GQM}

This section is devoted to summarize some basics of geometric quantum mechanics. We focus on the finite-dimensional case where the geometric formulation is fully well-posed, however this is not a severe drawback in view of the generalization that we propose further where the notion of Hilbert space is abandoned.

Let $\sH$ be a $n$-dimensional Hilbert space with $n>2$. The projective Hilbert space over $\sH$ is defined as $\sP(\sH):=\frac{\sH}{\sim}\setminus [0]$ where, for $\psi,\varphi\in\sH$, $\psi\sim\varphi$ if and only if $\psi=\alpha\varphi$ with $\alpha\in \bC\setminus\{0\}$. $\sP(\sH)$ is connected and Hausdorff in the quotient topology. It is well-known that the map $\sP(\sH)\ni [\psi]\mapsto \ket\psi\bra\psi\in \mathfrak P_1(\sH)$, with $\p\psi\p=1$, is a homeomorphism where $\mathfrak P_1(\sH)$ is the space of rank-1 orthogonal projectors in $\sH$ equipped with the topology induced by the standard operator norm.  

The projective Hilbert space $\sP(\sH)$, that can be identified with the complex projective space $\bC P^{n-1}$, has a structure of a $(2n-2)$-dimensional smooth real manifold and the tangent vectors $v\in T_p\sP(\sH)$ have the form $v=-i[A_v, p]$ for some self-adjoint operator $A_v$ on $\sH$ \cite{DV2}. As a real manifold $\sP(\sH)$ can be equipped with a symplectic structure given by the following form:
\beq\label{omega}
\omega_p(u,v):=-i\,\tr([A_u, A_v]p).
\eeq
$\sP(\sH)$ can be also equipped with the Riemannian structure induced by the well-known Fubini-Study metric $g$:
\beq\label{g}
g_p(u,v):=-\tr(([A_u,p][A_v,p]+[A_v,p][A_u,p])p).
\eeq 
One can prove that the metric $g$ is compatible with the symplectic form $\omega$ by means of the complex form $j_p:T_p\sP(\sH)\ni v\mapsto i[v,p]\in T_p\sP(\sH)$, i.e. $\sP(\sH)$ is a K\"ahler manifold.\\
{$\sP(\sH)$ carries the following representation of the unitary group $U(n)$:
\beq\label{UonP}
U(n)\times\sP(\sH)\ni (U,p)\mapsto UpU^{-1}\in\sP(\sH).
\eeq} 
As proved in \cite{DV2}, the unique regular Borel measure $\nu$ that is left-invariant w.r.t. the smooth action (\ref{UonP}) of the unitary group $U(n)$ on $\sP(\sH)$, with $\nu(\sP(\sH))=1$, coincides to the Riemannian measure induced by the metric $g$ and to the Liouville volume form defined by $\omega \wedge \cdots\mbox{($n-1$) times}\cdots\wedge \omega$, where $\wedge$ is the wedge product\footnote{In this paper the symbol $\wedge$ denotes the logical conjunction or the greatest lower bound between two elements of a poset, except here.}, up to its normalization. 

The geometry of the complex projective space $\sP(\sH)$ allows to define a Poisson structure for the formulation of a Hamiltonian mechanics for quantum systems. In this sense $\sP(\sH)$ can be thought as a quantum phase space, in particular a single projective ray, i.e. a \emph{pure state}, represent the exact knowledge of the physical condition of a quantum system in analogy to a single point of the phase space of a classical system.

 Let $A$ be a self-adjoint operator on $\sH$, i.e.\! a quantum observable in standard quantum mechanics, we can describe it as a real function $f_A:\sP(\sH)\rightarrow\bR$ by the definition $f_A(p):=\tr(Ap)$. We have the equivalence between the Schr\"odinger dynamics induced by a self-adjoint operator operator $H$ and the Hamilton dynamics induced by the corresponding Hamiltonian function $f_H$ \cite{AS, BSS, DV2}, more precisely a curve in $\sP(\sH)$ is a solution of the Schr\"odinger equation if and only if it satisfies the Hamilton equation:
\beq
i\frac{dp}{dt}=[H,p] \quad \Leftrightarrow \quad \frac{dp}{dt}=X_{f_H}(p),
\eeq
where $X_{f_H}$ is the \emph{Hamiltonian vector field} of $f_H$ defined as the unique vector field on $\sP(\sH)$ such that $\omega(X_{f_H}, Y)=df_H$ for any vector filed $Y$ on $\sP(\sH)$. The quantum observables as phase space functions are completely characterized by the following statement \cite{AS}:
\begin{theorem}\label{AS}
Let $f:\sP(\sH)\rightarrow \bR$ be a smooth function. There is a self-adjoint operator $A$ on $\sH$ such that $f(p)=$\emph{tr}$(Ap)$ $\forall p\in\sP(\sH)$ if and only if the Hamiltonian vector field $X_f$ is a Killing vector field w.r.t. the Fubini-Study metric, that is $\mathcal L_{X_f}g=0$ where $\mathcal L$ is the Lie derivative.
\end{theorem}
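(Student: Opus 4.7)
The plan is to prove the two implications separately, exploiting throughout the dictionary between Schr\"odinger evolution and Hamiltonian flow on $\sP(\sH)$ recalled just before the statement.

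For the direction $(\Rightarrow)$, suppose $f=f_A$ with $A=A^*$. The curve $t\mapsto U_t p U_t^{-1}$ with $U_t=e^{-iAt}$ satisfies $\frac{dp}{dt}=X_{f_A}(p)$, so it is the integral curve of $X_f$ through $p$; hence the flow $\phi_t$ of $X_f$ acts on $\sP(\sH)$ through the unitary representation (\ref{UonP}). Inspection of the formula (\ref{g}) shows immediately that unitary conjugation preserves $g$ (traces and commutators are invariant under $A_v\mapsto UA_vU^{-1}$, $p\mapsto UpU^{-1}$), so $\phi_t$ is a one-parameter family of isometries and $\mathcal L_{X_f}g=0$.

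For $(\Leftarrow)$, assume $\mathcal L_{X_f}g=0$. Since $\sP(\sH)$ is compact, $X_f$ is complete and generates a global smooth one-parameter group $\phi_t$ of diffeomorphisms; it preserves $g$ by hypothesis and it preserves $\omega$ because Cartan's formula combined with $\iota_{X_f}\omega=df$ gives $\mathcal L_{X_f}\omega=d(df)=0$. A short computation based on the K\"ahler compatibility $g(\,\cdot\,,\,\cdot\,)=\omega(j\,\cdot\,,\,\cdot\,)$ shows that a vector field preserving both $g$ and $\omega$ must also preserve the complex structure $j$, so each $\phi_t$ is a holomorphic isometry. The key geometric input is the classical fact that the identity component of the group of holomorphic isometries of $(\bC P^{n-1},g)$ is the projective unitary group $PU(n)=U(n)/U(1)$; consequently $\phi_t$ lifts to a smooth one-parameter subgroup of $U(n)$, which by Stone's theorem has the form $U_t=e^{-iAt}$ for some self-adjoint operator $A$, unique up to addition of a real multiple of the identity.

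Having produced $A$, the flows of $X_f$ and $X_{f_A}$ coincide, so $X_f=X_{f_A}$; the relation $\omega(X_f-X_{f_A},\cdot)=d(f-f_A)$ then forces $d(f-f_A)=0$, and connectedness of $\sP(\sH)$ yields $f=f_A+c$ for some $c\in\bR$. Because $\tr(p)=1$ for every $p\in\sP(\sH)$, this constant is absorbed by replacing $A$ with $A+cI$, giving $f(p)=\tr((A+cI)p)$. The main obstacle is the identification of the holomorphic isometry group of $(\bC P^{n-1},g)$ with $PU(n)$: this is classical but not elementary, and a fully self-contained proof requires either a direct analysis of the Killing equation in homogeneous coordinates or an appeal to the structure theory of Hermitian symmetric spaces of compact type. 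All the other steps reduce to the standard correspondence between Schr\"odinger and Hamilton dynamics already invoked in the excerpt.
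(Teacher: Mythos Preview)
The paper does not prove this theorem at all: it is quoted from \cite{AS} (Ashtekar--Schilling) as a known characterization, with no argument given in the text. So there is no ``paper's own proof'' to compare against.

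Your proof is correct and is in fact the standard route to this result. A couple of minor remarks. First, in the $(\Leftarrow)$ direction the detour through holomorphicity is not strictly necessary: a continuous one-parameter group of isometries of $(\bC P^{n-1},g)$ automatically lies in the identity component of the isometry group, which is already $PU(n)$; the preservation of $j$ then comes for free. Your derivation of $\mathcal L_{X_f}j=0$ from $\mathcal L_{X_f}g=\mathcal L_{X_f}\omega=0$ via the compatibility relation is nonetheless valid and makes the argument self-contained. Second, your honest flagging of the identification $\mathrm{Isom}_0(\bC P^{n-1},g)\cong PU(n)$ as the non-elementary input is exactly right; this is where the real work hides, and it is typically imported from the theory of compact Hermitian symmetric spaces or proved by a direct computation of the Killing algebra in homogeneous coordinates. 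Everything else reduces, as you say, to the Schr\"odinger--Hamilton dictionary and the absorption of the additive constant via $\tr(p)=1$.
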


The Poisson bracket of a pair of smooth functions $f,h:\sP(\sH)\rightarrow\bR$ is defined by $\{f,h\}_{PB}:=\omega(X_f, X_g)$. If $f(p):=\tr(A p)$ and $h(p):=\tr(B p)$ for two self-adjoint operators $A$ and $B$ on $\sH$ then the remarkable formula $\{f,h\}_{PB}=-i\tr([A,B]p)$ holds.

In addition to quantum observables also quantum states can be described by functions on $\sP(\sH)$ obtaining a classical-like framework to describe a quantum system. As not all the smooth real functions on $\sP(\sH)$ represent quantum observables we have that quantum states cannot be directly represented by probability densities on the quantum phase space \cite{DV2}. Let us call \emph{Liouville density} the function on $\sP(\sH)$ representing a quantum state as the analogue of a density probability of a probabilistic classical state. Given a density matrix $\sigma$ in $\sH$, i.e.\! a positive operator on $\sH$ with $\tr(\sigma)=1$, the corresponding {Liouville density} $\rho_\sigma:\sP(\sH)\rightarrow\bR$ is the unique Borel function on $\sP(\sH)$ such that:
\beq\label{densities}
\int_{\sP(\sH)} \rho_\sigma d\nu=1\quad\mbox{and}\quad
\tr(\sigma A)=\int_{\sP(\sH)} f_A\rho_\sigma \,d\nu,
\eeq    
for any self-adjoint operator $A$ on $\sH$, where $\nu$ is the $U(n)$-invariant Borel measure introduced above. The correspondence $\sigma\mapsto \rho_\sigma$, explicitly constructed in \cite{DV2}, is given by $\rho_\sigma(p)=n(n+1)\tr(\sigma p)-1$. Thus a Liouville density satisfying (\ref{densities}) cannot be interpreted directly as a probability density on $\sP(\sH)$ because it fails the requirement $0\leq \rho \leq 1$. Within this formulation the time evolution of mixed states is governed by the Liouville equation with Poisson bracket induced by (\ref{omega}).  It is clear that the Liouville density $\rho_{p_0}$ describing a pure state $p_0\in\sP(\sH)$ is not a Dirac delta centered in $p_0$, like a sharp classical state, but it is a \emph{smeared} distribution encoding the statistic produced by any possible measurement process on the system.

\section{Fuzzy sets and fuzzy events}\label{FSE}

Let us recall some basic notions of fuzzy set theory, historically proposed in \cite{Zadeh}: Let $\mathcal U$ be a non-empty set called \emph{universe of discourse} and $\mu_A:\mathcal U\rightarrow [0,1]$ be a function called \emph{membership function}. The pair $A=(\mathcal U, \mu_A)$ is called \emph{fuzzy set} and the value $\mu_A(x)$ is the \emph{grade of membership} of $x\in\mathcal U$ to $A$. We say that: $x$ is \emph{not included} in $X$ if $\mu_A(x)=0$,  $x$ is \emph{included} in $A$ if $\mu_A(x)=1$ and $x$ is \emph{partially included} in $A$ if $\mu_A(x)\in(0,1)$. If $x$ is partially included in $A$ it is also called \emph{fuzzy member} of $A$. Given two fuzzy sets $A$ and $B$, we say that $A$ is included in $B$ if $\mu_A(x)\leq \mu_B(x)$ for any $x\in\mathcal U$, the \emph{fuzzy inclusion} is denoted by $A\subseteq B$. There exist several ways for generalizing the classical set operations to fuzzy sets \cite{als}, let us recall the most known ones: A \emph{complement} of the fuzzy set $A$ is a fuzzy set $\neg A$ such that $\mu_{\neg A}(x)=1-\mu_A(x)$ for any $x\in\mathcal U$; \emph{union} and \emph{intersection} of the fuzzy sets $A$ and $B$ in the same universe $\mathcal U$ are fuzzy sets with membership functions respectively given by:
\beq\label{minnorm}
\mu_{A\cup B}(x)=\min\{\mu_A(x)+\mu_B(x), 1\}\quad , \quad \mu_{A\cap B}(x)=\max\{\mu_A(x)+\mu_B(x)-1, 0\}.
\eeq
Another generalization of classical set-theoretic union and intersection to fuzzy sets are given by the following membership functions:
\beq\label{product}
\mu_{A\cup B}(x)=\mu_A(x)+\mu_B(x)-\mu_A(x)\mu_B(x)\quad , \quad \mu_{A\cap B}(x)=\mu_A(x)\mu_B(x).
\eeq
In general, complements, unions, intersections of fuzzy sets can be constructed out following an axiomatic approach. Let us briefly recall the definition of fuzzy intersection in terms of a $t$-norm. A \emph{t-norm} is a function $t:[0,1]\times[0,1]\rightarrow[0,1]$ satisfying the following properties for all $x,y,w,z\in[0,1]$:
\\
i) $t(x,y)=t(y,x)$;\\
ii) $t(x,y)\leq t(z,w)$ if $x\leq z$ and $y\leq w$;\\
iii) $t(t(x,y),z)=t(z,t(y,z))$;\\
iv) $t(x,1)=1$.
\\
Given two fuzzy sets $A$ and $B$, their intersection $A\cap B$ is defined by:
\beq
\mu_{A\cap B}(x):=t(\mu_A(x), \mu_B(x)) \quad \forall x\in\mathcal U.
\eeq
The union $A\cup B$ is defined by means of the \emph{t-conorm} $s(x,y):=1-t(1-x,1-y)$ (this definition provides a generalization of De Morgan's laws):
\beq
\mu_{A\cup B}(x):=s(\mu_A(x), \mu_B(x)) \quad \forall x\in\mathcal U.
\eeq
The set operations (\ref{minnorm}) are defined by the so-called \emph{Lukasiewicz t-norm} $t(x,y):=\max\{x+y-1,0\}$ and the set operations (\ref{product}) are defined by the \emph{product t-norm} $t(x,y):=xy$. 

In \cite{Zadeh2} probability measures on fuzzy events are introduced: Let $\mathcal B(X)$ be the Borel $\sigma$-algebra on the topological space $X$ and $m:\mathcal B(X)\rightarrow [0,1]$ be a probability measure over $X$. A \emph{fuzzy event} in $X$ is a fuzzy set $A$ in $X$ whose membership function $\mu_A:X\rightarrow [0,1]$ is Borel measurable. The \emph{probability} of a fuzzy event $A$ is defined by:
\beq\label{pfe}
\mathbb P(A):=\int_X \mu_A(x) \, dm(x).
\eeq 
The integral (\ref{pfe}) is well-defined as $\mu_A$ is a Borel function, then we have a notion of probability measure on fuzzy events.

\section{Quantum propositions as fuzzy events in the complex projective space}\label{FECPS}

By analogy with the classical case, within the geometric formulation of quantum mechanics $\sP(\sH)$ plays the role of a \emph{quantum phase space}. Then we may try to develop an alternative  formulation of standard quantum logic based on measurable sets of $\sP(\sH)$ inspired by the classical case. Let us recall that in classical mechanics the propositions about a classical system described in the phase space $\cM$ are Borel sets in $\cM$ where logical connectives are represented by set operations. In this sense classical propositions present a structure of Boolean $\sigma$-algebra. 
As in classical mechanics, we have a phase space that is a symplectic manifold where we assume that any point represents a \emph{complete description} of the physical conditions of a system at given time. However in the quantum case we must to take into account of the following phenomenological evidences:
\\
\\
\textbf{E1. (Quantum randomness)} Repeated measurements of the physical quantity $A$ in the same physical condition produce different results.
\\
\\
\textbf{E2. (Incompatible observables)} There exist pairs of physical quantities that cannot be simultaneously measured.
\\

Let $P$ be a proposition about the considered quantum system, e.g. $P=$\emph{''The value of the physical quantity $A$ is $a\in\bR$"}, if we try to represent $P$ by a set $P\subset\sP(\sH)$ such that $p\in P$ if and only if $p$ renders $P$ true then we find an inconsistency with the evidence E1. In classical mechanics, given a proposition we can decide, for any point of the phase space, whether it is true; while in quantum mechanics, for any point of the quantum phase space, we just provide a probability that $P$ is true. This motivate the choice of representing quantum propositions by means of fuzzy sets on quantum phase space.
A quantum proposition $P$ can be represented by a fuzzy set in the universe $\sP(\sH)$ with the membership function $\mu_P:\sP(\sH)\rightarrow [0,1]$ so that the grade $\mu_P(p)$ is the probability that $p\in\sP(\sH)$ renders $P$ true. By analogy with the classical case, we define a state as a map assigning to any proposition $P$ the probability that it is true:
\beq\label{fuzzy state}
\sigma(P):=\int_{\sP(\sH)} \mu_P(p)\rho(p) \,d\nu(p),
\eeq 
where $\mu_P$ is the membership function of $P$, $\rho$ is a Liouville density on $\sP(\sH)$ and $\nu$ is the $U(n)$-invariant Borel measure defined in Section \ref{GQM}. In other words we define (\ref{fuzzy state}) substituting the indicator function of a Borel set in (\ref{state}) with the membership function of a fuzzy set. We can give this interpretation: In (\ref{fuzzy state}), $\mu_P$ represents the \emph{quantum uncertainty} and $\rho$ represents the \emph{epistemic uncertainty}. For $\sigma$ to be well-defined we require that $\mu_P$ is a Borel function on $\sP(\sH)$, in this sense quantum propositions are identified to fuzzy events in $\sP(\sH)$ and (\ref{fuzzy state}) is a probability of fuzzy events as in definition \ref{pfe} with $dm=\rho d\nu$. Moreover, by comparison with classical sharp states, we require that any Liouville density $\rho_{p_0}$ of a pure state, i.e. described by a single point $p_0\in\sP(\sH)$, acts as a Dirac delta on the membership functions of quantum propositions:
\beq\label{Fuzzy Dirac}
\sigma_{p_0}(P)=\int_{\sP(\sH)} \mu_P(p)\rho_{p_0}(p) \,d\nu(p)=\mu_P(p_0),
\eeq
obtaining a fuzzy version $\sigma_p$ of a Dirac measure (it provides a grade of membership instead of the value of an indicator function).
\begin{proposition} \label{mu}
A Borel measurable function $\mu:\sP(\sH)\rightarrow [0,1]$ satisfies
\beq \label{FD}
\int_{\sP(\sH)} \mu(p)\rho_{p_0}(p) \,d\nu(p)=\mu(p_0)\quad \forall p_0\in\sP(\sH)
\eeq
if and only if the Hamiltonian vector field $X_\mu$ of $\mu$ is defined and it is a Killing vector field w.r.t. Fubini-Study metric.
\end{proposition}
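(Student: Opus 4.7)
The plan is to reduce the entire statement to Theorem~\ref{AS}, which characterizes smooth functions of the form $f_A(p) = \tr(Ap)$ exactly by the Killing property of their Hamiltonian vector fields. The strategy is therefore to show that the integral identity (\ref{FD}) is equivalent to $\mu$ being of this special form.

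For the direction $(\Leftarrow)$, if $X_\mu$ is a well-defined Killing vector field for the Fubini-Study metric, Theorem~\ref{AS} supplies a self-adjoint operator $A$ on $\sH$ such that $\mu(p) = \tr(Ap)$. Substituting into the left-hand side of (\ref{FD}) and invoking the second identity in (\ref{densities}) with $\sigma = p_0$ yields
\[
\int_{\sP(\sH)} \mu(p)\,\rho_{p_0}(p)\,d\nu(p) \;=\; \tr(A p_0) \;=\; \mu(p_0),
\]
as required. This direction is essentially a one-line computation once Theorem~\ref{AS} is in hand.

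For the harder direction $(\Rightarrow)$, the key move is to exploit the explicit formula $\rho_{p_0}(p) = n(n+1)\tr(p_0 p) - 1$ recalled in Section~\ref{GQM}. Plugging it into (\ref{FD}) and using linearity of the trace gives
\[
\mu(p_0) \;=\; n(n+1)\int_{\sP(\sH)} \mu(p)\,\tr(p_0 p)\,d\nu(p) \;-\; \int_{\sP(\sH)} \mu(p)\,d\nu(p).
\]
Since $\mu$ is bounded Borel, $\nu$ is a finite Borel measure, and $\sH$ is finite-dimensional, the operator-valued (weak) integral
\[
B \;:=\; n(n+1)\int_{\sP(\sH)} \mu(p)\,p\,d\nu(p)
\]
defines a bounded self-adjoint operator on $\sH$: its matrix elements $\langle\psi|B|\varphi\rangle$ are integrals of bounded Borel scalar functions, and self-adjointness follows from that of $p$ together with $\mu(p)\in[0,1]\subset\bR$. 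Setting $c := \int_{\sP(\sH)} \mu\,d\nu \in \bR$ and using $\tr(p_0) = 1$, the previous display rewrites as $\mu(p_0) = \tr\bigl((B - cI)\,p_0\bigr)$ for every $p_0 \in \sP(\sH)$. With $A := B - cI$ self-adjoint, one has $\mu = f_A$, and Theorem~\ref{AS} then forces $X_\mu$ to be defined and Killing.

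The main technical point is making sense of the operator-valued integral $B$; in finite dimensions this is elementary (define $B$ entrywise in any orthonormal basis of $\sH$), so no subtle measure-theoretic subtlety arises. A pleasant by-product of the argument is that the bare Borel measurability of $\mu$ is automatically upgraded to smoothness, since the identity $\mu(p_0) = \tr\bigl((B - cI)p_0\bigr)$ holds pointwise on $\sP(\sH)$ and the right-hand side is smooth in $p_0$; hence the hypothesis that $X_\mu$ \emph{exists} in (\ref{FD})'s conclusion is not an additional assumption but a consequence of the integral identity.
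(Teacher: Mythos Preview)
Your proof is correct. The $(\Leftarrow)$ direction coincides with the paper's. For $(\Rightarrow)$ you take a genuinely different and more direct route: you expand $\rho_{p_0}(p)=n(n+1)\tr(p_0 p)-1$ inside the integral and read off the operator $A=B-cI$ by a finite-dimensional operator-valued integral, whereas the paper instead sums~(\ref{FD}) over the rays of an orthonormal basis to obtain the frame-function identity $\sum_i\mu([\psi_i])=\text{const}$ and then invokes an external characterization theorem (Theorem~18 of \cite{DV2}) to produce the self-adjoint $T$ with $\mu(p)=\tr(Tp)$. Your argument is more self-contained---it avoids the dependence on the frame-function result---while the paper's route makes the link to Gleason-type characterizations explicit. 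Both converge at the same application of Theorem~\ref{AS}, and your observation that smoothness of $\mu$ is a consequence rather than an extra hypothesis is a nice clarification that the paper leaves implicit.
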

\begin{proof} $\{\psi_i\}_i$ be an orthonormal basis of the Hilbert space $\sH$ and be $\{p_i\}_i$ the family of corresponding projective rays $p_i=[\psi_i]$. As mentioned in Section \ref{GQM}, the form of a Liouville density $\rho_{p_i}$ satisfying the requirements (\ref{densities}) is $\rho_{p_i}(p)=n(n+1)\tr(p_i p)-1$ then $\sum_i \rho_{p_i}=n^2$. Assuming that (\ref{FD}) holds, we have:
$$\sum_i \mu_P(p_i)=\sum_i\int\mu(p)\rho_{p_i}(p)\, d\nu(p)=n^2\int_{\sP(\sH)} \mu(p) \,d\nu(p),
$$
then 
\beq\label{frame}
\sum_i \mu\left([\psi_i]\right)=\mbox{const}\quad\mbox{for any orthonormal basis $\{\psi_i\}_i$ of $\sH$}.
\eeq
Theorem 18 of \cite{DV2} reads that if $\mu\in\cL^2(\sP(\sH),\nu)$ and the property (\ref{frame}) is satisfied then there exists a unique self-adjoint operator $T$ on $\sH$ such that $\mu(p)=\tr(Tp)$ $\forall p\in\sP(\sH)$. Since $\mu\in\cL^1(\sP(\sH),\nu)\subset\cL^2(\sP(\sH),\nu)$ (because $\nu$ is finite and $\sP(\sH)$) is compact) we have $\mu(p)=\tr(Tp)$ $\forall p\in\sP(\sH)$ for a self-adjoint operator $T$ on $\sH$. The Hamiltonian vector field $X_\mu$ is a g-Killing vector field by Theorem \ref{AS}. Conversely, if $\mu$ is a smooth function with Hamiltonian vector field $X_\mu$ that is Killing, then $\mu(p)=\tr(Tp)$ for a self-adjoint operator $T$ in $\sH$ by Theorem \ref{AS}. Thus $\int_{\sP(\sH)} \mu\,\rho_{p_0}\, d\nu=\tr(Tp_0)=\mu(p_0)$ by formula (\ref{densities}). 
\end{proof}

In view of Proposition \ref{mu}, if we want to describe quantum randomness via fuzzy sets on the quantum phase space and quantum states as probability measures on fuzzy events, we assume that quantum propositions are represented by fuzzy events in $\sP(\sH)$ whose membership functions are smooth and present Hamiltonian vector fields that are Killing vector fields w.r.t. Fubini-Study metric. Now we need a structure that takes into account the existence of incompatible propositions related to the phenomenological evidence E2.
We need the structure of a quantum logic on the fuzzy events where compatible elements represents propositions that are simultaneously verifiable. The next result provides a particular procedure to obtain a quantum logic structure on the fuzzy events in $\sP(\sH)$, in the next section there is a more general result to endow a collection of fuzzy sets with a quantum logic structure. 

\begin{proposition}\label{geologic}
Let $\widetilde{\mathfrak F}$ be the class of fuzzy events in $\sP(\sH)$ with smooth membership functions whose Hamiltonian vector fields are Killing vector fields on $\sP(\sH)$. Let $\star$ be the non-commutative product defined on the real smooth functions on $\sP(\sH)$ by:
\beq f \star g := fg+\frac{i}{2}\{f, g\}_{PB} + \frac{1}{2} g(X_f,X_g) \qquad f,g\in\cC^\infty_\bR(\sP(\sH)), 
\eeq
where $\{\,\,,\,\,\}_{PB}$ is the Poisson bracket induced by the symplectic form on $\sP(\sH)$ and $g$ is the Fubini-Study metric.
The subclass $\gF$ of idempotent elements in $\widetilde\gF$ ($P\in\widetilde\gF$ is said to be idempotent if $\mu_P\star \mu_P=\mu_P$) is a quantum logic where: \\
i) the partial order relation is the fuzzy inclusion;\\
ii) the maximum and the minimum are given by $\sP(\sH)$ and $\emptyset$;\\
iii) the orthocomplementation is the fuzzy complement;\\
iv) two fuzzy events $P$ and $Q$ are compatible if and only if $\mu_P\star\mu_Q=\mu_Q\star \mu_P$;\\ 
v) $P$ and $Q$ are orthogonal if and only if $\mu_P\star\mu_Q=0$;\\
vi) If $P$ and $Q$ are compatible then $P\vee Q$ and $P\wedge Q$ are the fuzzy events with membership functions:

\vspace{-0.7cm}

$$\mu_{P\vee Q}=\mu_P+\mu_Q-\mu_P\star\mu_Q$$
$$\mu_{P\wedge Q}=\mu_P\star\mu_Q\hspace{2cm}$$
vii) An ordering set of states is formed by the probability measures $\{\sigma_{p}\}_{p\in\sP(\sH)}$ defined by $\sigma_p(P):=\mu_P(p)$.

\end{proposition}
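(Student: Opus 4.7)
The plan is to reduce Proposition \ref{geologic} to the standard Birkhoff--von Neumann quantum logic by exhibiting a concrete isomorphism between $\gF$ and the lattice $\gP(\sH)$ of orthogonal projectors on $\sH$. The starting point is Proposition \ref{mu}: every fuzzy event $P \in \widetilde\gF$ has a membership function of the form $\mu_P(p) = \tr(T_P p)$ for a unique self-adjoint operator $T_P$ on $\sH$, and the constraint $\mu_P(\sP(\sH))\subseteq[0,1]$ forces $0 \le T_P \le I$, so $\widetilde\gF$ is in bijection with the set of quantum effects on $\sH$.

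The crux of the argument is the identity
\begin{equation*}
f_A \star f_B = f_{AB}, \qquad f_A(p) := \tr(Ap), \quad f_B(p) := \tr(Bp),
\end{equation*}
for arbitrary operators $A,B$ on $\sH$, which I would check pointwise at a rank-one projector $p=\ket\psi\bra\psi$. The Poisson-bracket term contributes $\tfrac{i}{2}\{f_A,f_B\}_{PB}(p)=\tfrac{1}{2}\tr([A,B]\,p)$. Since the Hamiltonian vector field of $f_A$ at $p$ is $-i[A,p]$, the Fubini--Study metric (\ref{g}) evaluated on $X_{f_A},X_{f_B}$ expands into four traces of the form $\tr([A,p][B,p]\,p)$; using $p^2=p$ and the rank-one identity $pXp = \tr(Xp)\,p$ these collapse to
\begin{equation*}
g_p(X_{f_A},X_{f_B}) = \tr(\{A,B\}_+\,p) - 2\tr(Ap)\tr(Bp),
\end{equation*}
and summing the three pieces of $f_A\star f_B(p)$ reproduces $\tr(ABp)=f_{AB}(p)$. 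This computation is the main technical obstacle; once it is in place, everything else is bookkeeping through the map $P\mapsto T_P$.

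Granting the identity, the idempotency condition $\mu_P\star\mu_P=\mu_P$ translates to $T_P^2=T_P$, so $T_P\in\gP(\sH)$, and $P\mapsto T_P$ becomes a bijection $\gF\to\gP(\sH)$. Items (i)--(vi) are then transported through this bijection from the standard quantum logic: pointwise fuzzy inclusion $\mu_P\le\mu_Q$ is equivalent to $T_P\le T_Q$ as positive operators, hence to the inclusion of ranges; the constants $1$ and $0$ on $\sP(\sH)$ correspond to $I$ and $0$; the fuzzy complement $1-\mu_P$ corresponds to $I-T_P$; compatibility $\mu_P\star\mu_Q=\mu_Q\star\mu_P$ and orthogonality $\mu_P\star\mu_Q=0$ reduce, via the star identity, to $[T_P,T_Q]=0$ and $T_PT_Q=0$; and for compatible $P,Q$ the displayed formulas for $\mu_{P\vee Q}$ and $\mu_{P\wedge Q}$ reproduce the meet $T_PT_Q$ and the join $T_P+T_Q-T_PT_Q$ of commuting projectors.

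Finally, for (vii) each $\sigma_p(P)=\tr(T_P p)=\bra\psi T_P \ket\psi$ is an ordinary vector state on $\gP(\sH)$, hence a generalized probability measure in the sense of Definition \ref{GPM}. The family $\{\sigma_p\}_{p\in\sP(\sH)}$ is ordering because $\tr(T_P p)\le\tr(T_Q p)$ for every rank-one projector $p$ implies $T_P\le T_Q$ as self-adjoint operators, whence $\mu_P\le\mu_Q$ on all of $\sP(\sH)$ and therefore $P\le Q$ in $\gF$.
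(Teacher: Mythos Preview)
Your proposal is correct and follows essentially the same route as the paper: establish $f_A\star f_B=f_{AB}$ by the direct computation using (\ref{omega}) and (\ref{g}), deduce that $\star$-idempotence singles out the orthogonal projectors, and then transport the lattice structure of $\gP(\sH)$ through the order isomorphism $P\mapsto T_P$. The paper's argument is the same in outline and in the key identity; your version adds the harmless extra remark that $0\le T_P\le I$ on $\widetilde\gF$, but otherwise the two proofs coincide.
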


\begin{proof}
The membership function $\mu:\sP(\sH)\rightarrow[0,1]$ of any fuzzy event of $\gF$ has the form $\mu(p)=\tr(Tp)$, with $T$ self-adjoint operator in $\sH$, by Theorem \ref{AS}. If $\mu_1,\mu_2\in\gF$ we have:
$$(\mu_1\star\mu_2)(p)=\tr(T_1p)\tr(T_2p)+\frac{i}{2}\{\mu_1, \mu_2\}_{PB} (p)+ \frac{1}{2} g_p(X_1(p),X_2(p)), $$
where $X_1$ and $X_2$ are the Hamiltonian vector fields of $\mu_1$ and $\mu_2$ respectively. By definitions (\ref{omega}) and (\ref{g}):
$$(\mu_1\star\mu_2)(p)=\tr(T_1p)\tr(T_2p)+\frac{1}{2}\tr([T_1,T_2]p)-\frac{1}{2}\tr(([T_1,p][T_2,p]+[T_2,p][T_1,p])p)$$
$$=\frac{1}{2}\tr([T_1,T_2]_+ p) +\frac{1}{2}\tr([T_1,T_2]p),\hspace{4.3cm}$$ 
where $[\,\,,\,\,]_+$ is the anti-commutator of operators. Since $[T_1,T_2]+[T_1,T_2]_+=2T_1T_2$ we obtain that $(\mu_1\star\mu_2)(p)=\tr(T_1T_2p)$ for all $p\in\sP(\sH)$. Thus if $\mu\in\gF$ then $\mu(p)=\tr(Tp)$ for a self-adjoint operator $T$ such that $T^2=T$, i.e. $T$ is an orthogonal projector in $\sH$. This fact establishes a bijective correspondence between $\gF$ and the class of orthogonal projectors in $\sH$ denoted by $\gP(\sH)$. In order to prove that $\gF$ is a quantum logic w.r.t. the fuzzy inclusion, we show that the map $h:\gP(\sH)\ni T\mapsto A_T\in\gF$, where the membership function of $A_T$ is $\mu_{A_T}(p)=\tr(Tp)$, is an order isomorphism. In this way $\gF$ can inherit the structures of $\gP(\sH)$ that is a bounded orthocomplemented, $\sigma$-complete, orthomodular lattice.
Let $\sS(\sH)$ be the unit sphere in $\sH$, if $\langle \psi|T\psi\rangle\leq \langle\psi|S\psi\rangle$ for any $\psi\in\sS(\sH)$ (i.e. $T\leq S$ in $\gP(\sH)$), that is $\tr(T\ket\psi\bra\psi)\leq \tr(S\ket\psi\bra\psi)$ for any $\psi\in\sS(\sH)$, then $\mu_{A_T}(p)\leq \mu_{A_S}(p)$ $\forall$ $p\in\sP(\sH)$ (i.e. $A_T\subseteq A_S$ as fuzzy sets). Conversely, if $\mu_{A_T}(p)\leq \mu_{A_S}(p)$ $\forall$ $p\in\sP(\sH)$ then $\tr(T\ket\psi\bra\psi)\leq \tr(S\ket\psi\bra\psi)$ for any $\psi\in\sS(\sH)$, i.e. $\langle \psi|T\psi\rangle\leq \langle\psi|S\psi\rangle$ for any $\psi\in\sS(\sH)$. Thus $\gF$ is a lattice w.r.t. the fuzzy inclusion that is isomorphic to $\gP(\sH)$. $h$ induces a structure of quantum logic on $\gF$ in the following way: The minimum in $\gP(\sH)$ is the null operator {\bfseries 0} and the maximum is the identity operator $\bI$ then $\gF$ is bounded where the minimum is the fuzzy set with zero membership function i.e. it is the empty set $\emptyset$ and the maximum is the fuzzy set with the constant membership function 1 i.e. the universe $\sP(\sH)$. The orthocomplement of $T$ in the lattice $\gP(\sH)$ is $\bI-T$, thus for $A\in\gF$ its orthocomplement, induced by $h$, is the fuzzy set $\neg A$ with membership function $\mu_{\neg A}(p)=\tr((\bI-h^{-1}(A))p)=1-\tr(h^{-1}(A)p)=1-\mu_A(p)$. The properties $iv), v), vi)$ are directly inherited by $\gP(\sH)$ via the isomorphism $h$. Claim vii) derives by the definition of partial order relation in $\gF$, in fact if $\sigma_p(P)\leq \sigma_p(Q)$ for any $p\in\sP(\sH)$ then $\mu_P(p)\leq \mu_Q(p)$ for any $p\in\sP(\sH)$ that is $P\leq Q$.
\end{proof}

Let us observe that any generalized probability measure on $\gF$ (Definition \ref{GPM}) presents the form of a state as defined in (\ref{fuzzy state}) for some Liouville density $\rho$ on $\sP(\sH)$ as a direct consequence of Theorem \ref{AS} and Gleason's theorem. In particular the states of the ordering set are the \emph{Dirac-like} measures introduced in (\ref{Fuzzy Dirac}).

In view of Proposition \ref{geologic} the class of fuzzy events in the complex projective space used to describe the quantum propositions within the geometric formulation presents a structure of quantum logic that is isomorphic to the standard B-vN quantum logic. In the next section we show that this structure is not merely induced by the lattice of orthogonal projectors in the underlying Hilbert space but derives from a general structure of quantum logic carried by collections of fuzzy sets where a non-commutative $\star$-product between membership functions is defined.

\section{Quantum logic by deformation of the product $t$-norm}\label{deform t-norm}

Proposition \ref{geologic} turns out to be a special case of a result proved in this section regardless the notion of projective Hilbert space. In fact the quantum structure of $\gF$  can be generalized in terms of fuzzy set-theoretic elements. Let us recall a known statement about functional aspects of quantum logics \cite{mcz, py2}.
\begin{theorem}\label{Mik}
Let $\mathcal E$ be a set of functions from a non-empty set $X$ to $[0,1]$ such that the following requirements are satisfied:
\\
i) $0\in\cE$;\\
ii) $f\in\cE\Rightarrow 1-f\in\cE$;
\\
iii) If $\{f_n\}_{n\in N}\subset \cE$ with $N$ finite or countable and $f_i+f_j\leq1$ for $i\not =j$ then $\sum_{n\in N} f_n \in\cE$.
\\
Then $\cE$ is a bounded orthocomplemented $\sigma$-orthocomplete orthomodular poset w.r.t. the partial order relation of real functions ($f\geq g$ if $f(x)\geq g(x)$ for any $x\in X$) where orthocomplementation is given by $\neg f=1-f$. 

Moreover for any $x\in X$ a generalized probability measure on $\cE$ is defined by $\sigma_x(f):=f(x)$ and $\{\sigma_x\}_{x\in X}$ is an ordering set on $L$.

\end{theorem}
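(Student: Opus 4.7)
The plan is to verify, one by one, the pieces of the quantum-logic structure, using pointwise ordering on $[0,1]$-valued functions as the partial order on $\cE$ and $\neg f:=1-f$ as the candidate orthocomplementation. Boundedness is immediate: axiom (i) supplies $0\in\cE$ as the minimum, and axiom (ii) applied to $0$ places the constant function $1$ in $\cE$ as the maximum. For orthocomplementation, $\neg f\in\cE$ by (ii); involution $\neg\neg f=f$ and the order-reversal $f\geq g\Rightarrow\neg g\geq\neg f$ are elementary pointwise identities; and the equality $f\vee\neg f=1$ follows by applying axiom (iii) to $\{f,1-f\}$ (whose pairwise sum is $1$) to put the constant $1$ in $\cE$, which coincides with the global maximum and hence is an upper bound that cannot be strictly refined. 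De Morgan combined with involutivity of $\neg$ then forces $f\wedge\neg f=0$.

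For $\sigma$-orthocompleteness, given a countable pairwise orthogonal family $\{f_n\}\subset\cE$, axiom (iii) places $s:=\sum_n f_n$ in $\cE$, and $s\geq f_k$ pointwise. To see $s$ is the \emph{least} upper bound I would argue by partial sums: if $h\in\cE$ is any upper bound, the order-reversing orthocomplementation turns $h\geq f_k$ into $\neg h\leq\neg f_k$, and one iteratively invokes (iii) on the augmented orthogonal family $\{\neg h, f_1, f_2,\ldots\}$ to control $\neg h$ from below by $\neg s$, whence $h\geq s$. For orthomodularity, given $p\leq q$ in $\cE$ the crux is to realize $r:=q-p$ as an element of $\cE$: the pair $\{p,\,1-q\}$ is orthogonal because $p+(1-q)=1-(q-p)\leq 1$, so axiom (iii) places $1-(q-p)\in\cE$ and then axiom (ii) gives $r=q-p\in\cE$. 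Since $p\perp r$ with $p+r=q$, the preceding step identifies $p\vee r=q$, and a direct check on the candidate lower bounds of $\neg p$ and $q$ within $\cE$ confirms $r=\neg p\wedge q$, which delivers the orthomodular identity $q=p\vee(\neg p\wedge q)$.

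Finally, the state claim is essentially tautological: $\sigma_x(1)=1$ and the commutation of pointwise evaluation with pointwise summation give the countable additivity $\sigma_x(\sum_n f_n)=\sum_n\sigma_x(f_n)$ on orthogonal families, while the hypothesis $\sigma_x(f)\geq\sigma_x(g)$ for every $x\in X$ is literally the pointwise inequality $f\geq g$ defining the order on $\cE$. The principal obstacle I anticipate is the minimality clause in the $\sigma$-orthocompleteness step: since $\cE$ need not be closed under pointwise maxima or minima, the passage from ``$h$ dominates each $f_k$'' to ``$h$ dominates $\sum_n f_n$'' cannot be performed purely pointwise and must be extracted by repeatedly leveraging axiom (iii) together with the orthocomplement to reformulate the inequality as one about orthogonal sums already living inside $\cE$.
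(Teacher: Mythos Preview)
The paper does not supply its own proof of this theorem; it is quoted as a known result from the literature (references [Mcz74] and [Py94]) with the introductory sentence ``Let us recall a known statement about functional aspects of quantum logics.'' So there is no in-paper argument to compare against.

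Evaluated on its own terms, your sketch is correct and covers the key point that these references also hinge on. The decisive observation---which you identify as the ``principal obstacle'' and then resolve---is that for a pairwise orthogonal family $\{f_n\}$ and any upper bound $h\in\cE$, the enlarged family $\{1-h,f_1,f_2,\dots\}$ is again pairwise orthogonal, so axiom (iii) forces $(1-h)+\sum_n f_n\in\cE\subset[0,1]^X$, whence $h\geq\sum_n f_n$. This single trick does triple duty: it shows $\bigvee_n f_n=\sum_n f_n$, it yields the special case $a\vee b=a+b$ whenever $a\perp b$, and (via De Morgan) it gives $\neg p\wedge q=q-p$ when $p\leq q$, from which orthomodularity is immediate. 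Your treatment of boundedness, the orthocomplementation axioms, and the ordering set of evaluation states is straightforward and correct. One small stylistic remark: in the orthocomplementation step you do not need to invoke De Morgan separately to get $f\wedge\neg f=0$; it follows from the same minimality argument applied to the pair $\{f,1-f\}$ dualized, or simply from the observation that any $g\leq f$ and $g\leq 1-f$ satisfies $g\leq\min(f,1-f)$ pointwise while $0$ is already a lower bound in $\cE$, and the least-upper-bound trick with $\{g, f, 1-f\}$ forces $g=0$.
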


The next result gives a procedure to obtain a quantum logic from a collection of fuzzy sets by means of the definition of \emph{deformed} fuzzy union and intersection where the pointwise product of the $t$-nom (\ref{product}) is replaced by a unital associative product $\star$ over membership functions. Then the deformed set operations will not satisfy all the requirements of union and intersection unless the membership functions $\star$-commute. Let us remark that we do not use the terms \emph{deformation} and \emph{$\star$-product} in the sense of deformation quantization as clarified by the statement of Theorem \ref{main}. However we will observe a possible connection between this result and some aspects of deformation quantization.

\begin{theorem}\label{main}
Let $\widetilde\gF$ be a class of fuzzy sets in the universe $\mathcal U$. Let $M$ be a set of functions $\mathcal U\rightarrow[0,1]$ such that $\mu_A\in M$ for any $A\in\widetilde\gF$ and $\star$ be a product\footnote{We mean that $\star$ is a binary operation over $M$ such that $(M,+,\star)$ is a ring where $+$ is the standard sum.} over $M$. For $A$ and $B$ in $\widetilde\gF$ let $A\Cup B$ and $A\Cap B$ be the fuzzy sets in $\mathcal U$ with membership functions:

\vspace{-0.5cm}

\beq\label{u}
\mu_{A\Cup B}:=\mu_A+\mu_B-\mu_A\star\mu_B ,
\eeq

\vspace{-0.8cm}

\beq\label{i}
\mu_{A\Cap B}:=\mu_A\star\mu_B . \hspace{2.1cm}
\eeq
Let $\gF$ be the subclass of elements of $\widetilde\gF$ such that their memebrship functions are idempotent w.r.t. $\star$. If the following facts hold:
\\
i) $\emptyset\in\gF$;\\
ii) $A\in\gF\Rightarrow\neg A\in\gF$;\\
iii) $\mu_A+\mu_B\leq 1$ if and only if $A\Cap B=\emptyset$;            \\
iv) If $\{A_i\}_{i\in\bN}\subset \gF$ satisfies $A_i\Cap A_j=\emptyset$ for $i\not =j$  then $\Cup_i A_i\in\gF$;\\
then:
\\
1) $\gF$ is a quantum logic w.r.t the fuzzy order relation where the orthocomplementation is given by the fuzzy complement;\\
2) $A\perp B$ if and only if $\mu_A\star\mu_B=0$. 
\\
3) If $\gF_0\subset \gF$ is a lattice such that $A\vee B=A\Cup B$ and $A\wedge B=A\Cap B$ for all $A, B\in\gF_0$ then all the fuzzy sets of $\gF_0$ are pairwise compatible and their membership functions commute w.r.t. $\star$.
\end{theorem}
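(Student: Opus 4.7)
My plan is to reduce claim (1) to Theorem \ref{Mik}, read claim (2) off the definition of orthogonality combined with hypothesis (iii), and prove (3) by first using symmetry of the lattice meet to obtain $\star$-commutativity and then constructing an explicit orthogonal decomposition witnessing compatibility. For (1) I would work with $M_\gF:=\{\mu_A:A\in\gF\}\subset[0,1]^{\mathcal U}$ and verify the three hypotheses of Theorem \ref{Mik} for $M_\gF$: the zero function is in $M_\gF$ by (i), and $f\mapsto 1-f$ preserves $M_\gF$ by (ii). The third hypothesis is the delicate one; given $\{\mu_{A_n}\}\subset M_\gF$ with $\mu_{A_i}+\mu_{A_j}\leq 1$ pairwise, hypothesis (iii) converts this to $A_i\Cap A_j=\emptyset$, hypothesis (iv) gives $\Cup_n A_n\in\gF$, and a straightforward induction using (\ref{u}) with $\mu_{A_i}\star\mu_{A_j}=0$ shows $\mu_{\Cup_n A_n}=\sum_n\mu_{A_n}$, so this sum lies in $M_\gF$. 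Theorem \ref{Mik} then endows $M_\gF$, hence $\gF$ through the tautological bijection $A\leftrightarrow\mu_A$, with the bounded $\sigma$-orthocomplete orthomodular poset structure of claim (1): the partial order is pointwise comparison (i.e.\ fuzzy inclusion), the orthocomplementation is the fuzzy complement, and the ordering set of states is $\{\sigma_u\}_{u\in\mathcal U}$ with $\sigma_u(A):=\mu_A(u)$.

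For claim (2), unfolding the definition of orthogonality gives $A\perp B\iff B\leq\neg A\iff \mu_B\leq 1-\mu_A\iff \mu_A+\mu_B\leq 1$, and hypothesis (iii) equates the last condition to $A\Cap B=\emptyset$, i.e.\ $\mu_A\star\mu_B=0$. For claim (3), commutativity is immediate from the symmetry of the lattice meet in $\gF_0$: $\mu_A\star\mu_B=\mu_{A\Cap B}=\mu_{A\wedge B}=\mu_{B\wedge A}=\mu_{B\Cap A}=\mu_B\star\mu_A$. To witness compatibility of $A,B\in\gF_0$, I would set $p:=\mu_A\star\mu_B$ and $r_3:=A\wedge B\in\gF_0$, so that $\mu_{r_3}=p$. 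Since $r_3\leq A$ in $\gF$, orthomodularity inside the Mik poset supplies a unique $r_1\in\gF$ with $A=r_3\vee r_1$ and $r_1\perp r_3$; because joins of orthogonal elements are pointwise sums in that poset, $\mu_{r_1}=\mu_A-p$. Symmetrically one obtains $r_2\in\gF$ with $B=r_3\vee r_2$, $r_2\perp r_3$ and $\mu_{r_2}=\mu_B-p$.

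The only substantive check is $r_1\perp r_2$, which by (2) reduces to $(\mu_A-p)\star(\mu_B-p)=0$. Expanding by bilinearity of $\star$ produces four terms: idempotency of $\mu_A$ and $\mu_B$ collapses $\mu_A\star p$ and $p\star\mu_B$ each to $p$, and idempotency of $p$ (inherited from $r_3\in\gF$) gives $p\star p=p$, so the sum is $p-p-p+p=0$. Hence $(r_1,r_2,r_3)$ witnesses the compatibility of $A$ and $B$, finishing (3). The point where I expect to need most care is the passage from finite to countable disjoint $\Cup$ in the verification of the third Mik hypothesis: the identity $\mu_{\Cup_{i=1}^{n}A_i}=\sum_{i=1}^{n}\mu_{A_i}$ follows by induction from (\ref{u}) under pairwise $\star$-orthogonality, but matching it with the fuzzy set $\Cup_{i\in\bN}A_i$ furnished by (iv) tacitly uses that the extension of $\Cup$ to countable disjoint families is compatible with a monotone-limit procedure on $[0,1]^{\mathcal U}$; granting this compatibility, the remaining steps are purely algebraic manipulations in the ring $(M,+,\star)$.
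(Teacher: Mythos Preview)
Your proof of (1) and (2) is essentially identical to the paper's: both set $\cE=\{\mu_A:A\in\gF\}$, verify the three hypotheses of Theorem~\ref{Mik} exactly as you do (using (i), (ii), and the chain (iii)$\Rightarrow$(iv)$\Rightarrow\sum_i\mu_{A_i}\in\cE$), and then read orthogonality off hypothesis (iii). Your cautionary remark about the passage from finite $\Cup$ to countable $\Cup$ is well taken; the paper glosses over the same point, simply writing ``so $\sum_i\mu_i\in\cE$'' without further comment.

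For (3) you take a genuinely different route. The paper argues abstractly: once $\star$-commutativity is established from $A\vee B=B\vee A$, it asserts that $(\gF_0,\Cup,\Cap,\neg,\emptyset,\mathcal U)$ is a Boolean algebra ``by idempotence'' and then invokes the general fact that in a Boolean algebra every pair of elements is compatible. Your argument is instead a direct construction: you set $r_3=A\wedge B$, extract $r_1,r_2$ via orthomodularity of the ambient poset $\gF$, identify their membership functions as $\mu_A-p$ and $\mu_B-p$ using the sum formula for orthogonal joins, and verify $r_1\perp r_2$ by the ring computation $(\mu_A-p)\star(\mu_B-p)=p-p-p+p=0$ (which indeed follows from associativity of $\star$ and idempotency of $\mu_A,\mu_B,p$). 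Your approach is more explicit and has the advantage that it does not tacitly assume $\gF_0$ is closed under $\neg$ or contains $\emptyset,\mathcal U$, hypotheses the statement of (3) does not supply but the paper's Boolean-algebra claim would seem to require. The paper's approach, on the other hand, is shorter and places the compatibility witnesses inside $\gF_0$ itself rather than merely in $\gF$. One small correction: uniqueness of $r_1$ is not needed and not generally true in an orthomodular poset; existence is all you use.
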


\begin{proof}
Let $\gF$ be a collection of fuzzy sets that satisfies hypotheses i)-iv) and $\cE$ be the corresponding set of membership functions. $0\in\cE$ by hypothesis i). If $\mu\in\cE$ then $1-\mu\in\cE$ by hypothesis ii). Let us consider $\{\mu_i\}_{i\in\bN}\subseteq \cE$ such that $\mu_i+\mu_j\leq 1$ for $i\not =j$ then, by hypothesis iii), the corresponding family of fuzzy sets $\{A_i\}_{i\in\bN}$ satisfies $A_i\Cap A_j=\emptyset $ for $i\not =j$. By hypothesis iv) we have that $\Cup_i A_i\in \gF$, so  $\sum_i \mu_i\in\cE$. In view of Theorem \ref{Mik} we have that $\cE$ is a quantum logic w.r.t. the partial order relation of real functions and the complement $\neg\mu=1-\mu$. Therefore $\gF$ turns out to be a quantum logic w.r.t. the fuzzy inclusion and the fuzzy complement as orthocomplementation.
\\
Statement 2) directly derives form hypothesis iii): If $A\perp B$ in $\gF$, i.e. $A\leq \neg B$, then $\mu_A\leq 1-\mu_B$. By hypothesis iii): since $\mu_A+\mu_B\leq 1$ we have $\mu_A\star\mu_B=0$. Conversely, if $\mu_A\star\mu_B=0$ then $\mu_A\leq 1-\mu_B$. 
\\
Assuming that $\gF_0\subset \gF$ is a lattice  where $A\vee B=A\Cup B$ for all $A, B\in\gF_0$, since $A\vee B=B\vee A$ we have that the membership functions of the elements in $\gF_0$ obviously commute w.r.t. $\star$. Moreover it is straightforward checking that $(\gF_0, \Cup, \Cap, \neg, \emptyset, \mathcal U)$ is a Boolean algebra by idempotence, then any pair of elements in $\gF_0$ commute.
\end{proof}

\vspace{0.5cm}

In the previous section, Proposition \ref{geologic} establishes the structure of quantum logic of a class of fuzzy events (fuzzy sets with Borel measurable membership functions) in a Hilbert projective space equipped with a particular non-commutative product between membership functions. In that case $\gF$ turns out to be the geometric Hamiltonian version of the standard Birchoff-von Neumann quantum logic. Theorem \ref{main} is a more general result that establishes the structure of a quantum logic of a class of fuzzy sets equipped with a general non-commutative product between membership functions where the compatibility is related to the commutativity w.r.t.\!\! the considered product. In particular (\ref{u}) and (\ref{i}) are modifications of fuzzy union and intersection given by the product t-norm (\ref{product}) where the pointwise product of membership functions is replaced by a (in general non-commutative) product $\star$ over the set $M$. In Proposition \ref{geologic}, $M$ corresponds to $\cC^\infty_\bR(\sP(\sH))$.

Let us conclude this section observing a possible relationship of Theorem \ref{main} with deformation quantization. \emph{Deformation quantization} can be summarized by the following paradigma: Given the phase space $\cM$ of a classical system, the observable algebra $\cC^\infty(\cM)$ is deformed into a non-commutative one in order to obtain a quantum observable algebra. For example, $\cC^\infty(\bR^2)$ is deformed by the Moyal product. More generally a Poisson algebra $A$ is deformed by means of a formal power series in $\hbar$:
\beq
f\star g:=f\cdot g+\sum_{n=1}^\infty \hbar^n P_n(a,b),\quad f,g\in A
\eeq
where $P_n$ are $n$-bilinear forms such that $\star$ is associative \cite{ger}. If we have a classical system with phase space $\cM$ then we describe the elementary propositions about it as Borel sets in $\cM$. A rough prescription of quantization can be formulated in these terms: Quantum propositions are described by fuzzy sets with membership functions in $\cC^\infty(\cM)$ (e.g. mollifications $\mu_P$ of the indicator functions $\chi_P$ of the Borel sets $P$) equipped with a non-commutative $\star$-product such that the hypotheses of Theorem \ref{main} are satisfied. In this way we obtain a structure of a quantum logic where the notion of quantum state can be provided as a generalized probability measure and a non-commutative product is defined on the observable algebra. In order to recover a notion of classical limit one should require that $f\star g\rightarrow f\cdot g$ for all $f,g\in\cC^\infty(\cM)$ and $\mu_P\rightarrow\chi_P$ for all $P\in\cB(\cM)$ as $\hbar\rightarrow 0$. However the development of these ideas is beyond the contents and scope of this work.

\section{Conclusions}

In this paper we have constructed a concrete quantum logic, a $\sigma$-orthocomplete orthomodular poset allowing an ordering set of generalized probability measures, within the geometric formulation of quantum mechanics. More precisely we have described propositions about a quantum system as fuzzy sets in the quantum phase space by analogy with Hamiltonian classical mechanics where propositions are represented by Borel sets in the classical phase space. Then we have proved that the class $\gF$ of fuzzy sets in a projective Hilbert space $\sP(\sH)$, representing quantum propositions, presents a structure of a quantum logic. In particular $\gF$ turns out to be a bounded $\sigma$-complete orthomodular lattice that is isomorphic to the non-distributive lattice of orthogonal projectors on $\sH$. However $\gF$ is not merely a rephrasing of the standard Birkhoff-von Neumann quantum logic but it is a special case of a general quantum logic structure carried by classes of fuzzy sets endowed with a non-commutative product over membership functions. The statement of theorem \ref{main} can be interpreted as a prescription to deform the product $t$-norm of fuzzy intersection by means of a (in general non-commutative) product $\star$ in order to obtain a quantum logic from a class of fuzzy sets such that the compatibility in the poset is related to the commutation w.r.t. $\star$. Indeed quantum logics of fuzzy sets, the so-called \emph{fuzzy quantum logics}, are an old idea introduced in \cite{py1} and a complete characterization of them was provided by Pykacz in \cite{py2} in terms of collections of fuzzy sets satisfying a list of requirements w.r.t.\! Lukasiewicz operations. In this regard, following the geometric classical-like approach to quantum mechanics, we have introduced a specific kind of fuzzy quantum logics (i.e.\! those satisfying the hypotheses of Theorem \ref{main}) whose \emph{quantumness} is characterized by a non-commutative product, in this spirit we suggest a general connection with deformation quantization. Therefore a possible direction of investigation could be devoted to the construction of a quantum logic of fuzzy sets via a deformed $t$-norm as a quantization prescription.\!\! An interesting issue may be the relationship between the deformations of the product $t$-norm for operations on fuzzy sets in a classical phase space and the $\star$-products of deformation quantization.

\section*{Acknowledgements}
This work is supported by a grant from Q@TN.


\begin{thebibliography}{References}

\bibitem[ATV83]{als} C. Alsina, E. Trillas, L. Valverde {\em On some logical connectives for fuzzy set theory}, J. Math. Anal. Appl. 93, 15-26 (1983)
\bibitem[AS95]{AS}  A. Ashtekar and T. A. Schilling, {\em Geometry of quantum mechanics}, AIP Conference Proceedings, {\bf 342}, 471-478 (1995)
\bibitem[BSS04]{BSS} A. Benvegn\`u, N. Sansonetto and M. Spera \emph{Remarks on geometric quantum mechanics}, Journal of Geometry and Physics 51 229-243 (2004)
\bibitem[BvN36]{BvN} G. Birkhoff, J. von Neumann {\em The logic of quantum mechanics}, Ann. Math. 37, 823-843 (1936)
\bibitem[BH01]{BH} D. C. Brody, L. P. Hughston. {\em Geometric quantum mechanics},  J. Geom. Phys. {\bf 38} 19-53  (2001)
\bibitem[BW85]{convex} L. J. Bunce, J. D. M. Wright  \emph{Quantum logics and convex geometry}, Commun. Math. Phys. 101, 87-96 (1985)
\bibitem[Ca17]{camosso} S. Camosso {\em Quantum logic and geometric quantization}, J. Quantum Inf. Sci., vol. 7, 35-42 (2017)
\bibitem[dCK16]{CK} N. da Costa, J. Kouneiher {\em Superlogic manifolds and geometric approach to quantum logic}, Int. J. Geom. Methods M. Vol. 13, No. 01, 1550129 (2016)
\bibitem[EGL07]{HQL} K. Engesser, D. M. Gabbay, D. Lehmann (eds.) {\em Handbook of Quantum Logic and Quantum Structures}, Elsevier, Amsterdam (2007)
\bibitem[FR74]{FR} D. J. Foulis, C. H. Randall, {\em Empirical logic and quantum mechanics}, Synthese 29, 81-111 (1974)
\bibitem[Ge63]{ger} M. Gerstenhaber {\em The cohomology structure of an associative ring}, Ann. Math. 78(2):267-288 (1963)
\bibitem[Gle57]{Gleason} A. M. Gleason. \emph{Measures on the closed subspaces of a Hilbert space}, Journal of Mathematics and Mechanics, Vol.6, No.6, 885-893 (1957).
\bibitem[Kib79]{Kibble} T. W. B. Kibble {\em Geometrization of Quantum Mechanics}, Commun. Math. Phys. { 65}, 189-201 (1979)
\bibitem[Mac57]{Mac} G. W. Mackey {\em Quantum mechanics and Hilbert space}, Amer. Math. Monthly 64:2, 45-57 (1957)
\bibitem[Mcz74]{mcz} M. J. Maczy\'nski {\em Functional properties of quantum logics}, Int. J. Theor. Phys., 11, 149-156 (1974)
\bibitem[Mo17]{V} V. Moretti {\em Spectral Theory and Quantum Mechanics}, (2nd ed.) Springer International Publishing (2017)
\bibitem [MP16]{DV2}  V. Moretti and D. Pastorello \emph{Frame functions in finite-dimensional quantum mechanics and its hamiltonian formulation on complex projective spaces}, Int. J. Geom. Methods M. Vol. 13, No. 02, 1650013 (2016)
\bibitem[Pi64]{piron} C. Piron {\em Axiomatique quantique}, Helvetica Physica Acta 37, 439-468 (1964)
\bibitem[Py87]{py1} J. Pykacz {\em Quantum logics as families of fuzzy subsets of the set of physical states}, Preprints of the Second International Fuzzy Systems Association Congress, Tokyo, July 20-25, vol2, pp. 437-440 (1987)
\bibitem[Py94]{py2} J. Pykacz {\em Fuzzy quantum logics and infinite-valued Lukasiewicz logic}, Int. J. Theor. Phys., 33, 1403-1416 (1994)
\bibitem[Py07]{py3} J. Pykacz {\em Quantum structures and fuzzy set theory} in Handbook of Quantum Logic and Quantum Structures, Elsevier Amsterdam (2007)
\bibitem[Za65]{Zadeh} L. A. Zadeh {\em Fuzzy sets}, Information and Control 8 (3) 338-353 (1965)
\bibitem[Za68]{Zadeh2} L. A. Zadeh {\em Probability measures of fuzzy events}, J. Math. Anal. Appl. 23, 421-427 (1968)
\end{thebibliography}
\end{document}